\numberwithin{equation}{section}
\newtheorem{thm}{Theorem}[section]
\newtheorem{prop}[thm]{Proposition}
\newtheorem{de}[thm]{Definition}
\newtheorem{rem}[thm]{Remark}
\newcommand{\eqa}{\begin{eqnarray}}
\newcommand{\eeqa}{\end{eqnarray}}
\newcommand{\beq}{\begin{equation}}
\newcommand{\eeq}{\end{equation}}
\newcommand{\nn}{\nonumber}
\newcommand{\p}{\partial}
\def \la {\langle}
\def \ra{\rangle}
\begin{document}

\title[]
{A 2-component $\mu$-Hunter-Saxton equation}
\author[]{Dafeng Zuo}

\address[]{Department of Mathematics,University of Science and Technology of China, Hefei 230026,
P.R.China}

\email{dfzuo@ustc.edu.cn}

\subjclass[2000]{37K10, 35Q51}

\begin{abstract}
In this paper, we propose a two-component generalization of the generalized Hunter-Saxton
equation obtained in \cite{BLG2008}. We will show that this equation is a
bihamiltonian Euler equation, and also can be viewed as a bi-variational equation.
\end{abstract}

\maketitle 


\section{Introduction}

V.I.Arnold in \cite{Arn1966} suggested a general framework for the Euler equations on an arbitrary
(possibly infinite-dimensional) Lie algebra  $\mathcal{G}$. In many cases, the Euler
equations on $\mathcal{G}$ describe geodesic flows with respect to a suitable one-side invariant
Riemannian metric on the corresponding group $G$. Now it is well-known that Arnold's approach to
the Euler equation works very well for the Virasoro algebra and its extensions, see \cite{OK1987, M1998,KM2003,
Con2003,Con2006,Con2007} and references
therein.

Let $\mathcal{D}(\mathbb{S}^1)$ be a group of orientation preserving
diffeomorphisms of the circle and $G=\mathcal{D}(\mathbb{S}^1)\oplus \mathbb{R}$ be the Bott-Virasoro group.
In \cite{OK1987},  Ovsienko and Khesin showed that the KdV equation is an Euler equation, describing
a geodesic flow on $G$ with respect to a right invariant $L^2$ metric. Another
interesting example is the Camassa-Holm equation, which was originally  derived in \cite{FF1981}
as an abstract equation with a bihamiltonian structure, and independently in \cite{CH1993}
as a shallow water approximation. In \cite{M1998}, Misiolek  showed that the Camassa-Holm equation
is also an Euler equation for a geodesic flow on $G$ with respect to a right-invariant Sobolev $H^1$-metric.

In \cite{KM2003}, Khesin  and Misiolek  extended the Arnold's
approach to homogeneous spaces and provided a beautiful geometric setting for the Hunter-Saxton equation,
which  firstly appeared in \cite{HS1991} as an asymptotic equation for rotators in liquid crystals,
and its relatives. They showed that the Hunter-Saxton equation is an Euler equation describing the geodesic
flow on the homogeneous spaces of the Bott-Virasoro group $G$ modulo rotations with respect to a right
invariant homogeneous $\dot{H}^1$ metric.

Furthermore, by using extended Bott-Virasoro groups, Guha etc. \cite{Guha2000,Guha2006, Guha2008} generalized
the above results to two-component integrable systems, including several coupled KdV type systems and
 2-component peak type systems, especially 2-component Camassa-Holm equation which was introduced by Chen,
 Liu and Zhang \cite{CLZ2006} and  independently by Falqui \cite{F2006}.  Another interesting topic is
 to discuss the super or supersymmetric analogue, see \cite{OK1987, DS2001,Guha2006,JO2008, JO2009,SGD2009}
 and references therein.

Recently Khesin, Lenells and Misiolek in \cite{BLG2008} introduced a generalized Hunter-Saxton
($\mu$-HS in brief) equation lying ¡®mid-way¡¯ between the periodic Hunter-Saxton and Camassa-Holm equations,
    \beq -f_{txx} = -2\mu(f)f_x +2f_xf_{xx} +ff_{xxx},\label{eq1.1} \eeq
where $f=f(t,x)$ is a time-dependent function on the unit circle $\mathbb{S}^1 = \mathbb{R}/\mathbb{Z}$ and
$\mu(f)=\int_{\mathbb{S}^1}fdx$ denotes its mean.  This equation describes evolution of rotators in
liquid crystals with external magnetic field and self-interaction.

Let $\mathcal{D}^s(\mathbb{S}^1)$  be a group of orientation preserving
Sobolev $H^s$ diffeomorphisms of the circle. They proved that the $\mu$-HS equation \eqref{eq1.1}
 describes a geodesic flow on $\mathcal{D}^s(\mathbb{S}^1)$ with a right-invariant metric given at the
identity by the inner product
\beq \la  {f}, {g} \ra_{\mu}=\mu(f)\mu(g)+\int_{\mathbb{S}^1}f'(x)g'(x)dx.\label{eq1.2}\eeq
They also showed that \eqref{eq1.2} is bihamiltonian and admits both cusped as well as
smooth traveling-wave solutions which are natural candidates for solitons.
In this paper,  we want to generalize these to a two-component
 $\mu$-HS (2-$\mu$HS in brief) equation.  Our main object is
 the Lie algebra $\mathcal{G}=\hbox{Vect}^s(\mathbb{S}^1)
\ltimes \hbox{C}^\infty (\mathbb{S}^1)$ and its three-dimensional central extension
$\widehat{\mathcal{G}}$. Firstly, we introduce an inner product on $\widehat{\mathcal{G}}$
given by
\beq \la \hat{f},\hat{g} \ra_{\mu}=\mu(f)\mu(g)+\int_{\mathbb{S}^1}(f'(x)g'(x)+a(x)b(x))dx+
\overrightarrow{\alpha}\cdot\overrightarrow{\beta},\label{eq0.3} \eeq
where $\hat{f}=(f(x)\frac{d}{dx},a(x),\overrightarrow{\alpha})$,
$\hat{g}=(g(x)\frac{d}{dx},b(x),\overrightarrow{\beta})$
and $\overrightarrow{\alpha},\overrightarrow{\beta}\in \mathbb{R}^3$. Afterwards, we have

\begin{thm}$[$=Theorem \ref{Thm1.1}$].$ The Euler equation on $\widehat{\mathcal{G}}_{reg}^*$
with respect to \eqref{eq0.3} is a 2-$\mu$HS equation
\beq \label{eq1.3}\left\{\begin{array}{l}
 -f_{xxt}=2\mu(f)f_x-2f_xf_{xx}-ff_{xxx}+v_xv-\gamma_1f_{xxx}+\gamma_2v_{xx},\\
v_t=(vf)_x-\gamma_2f_{xx}+2\gamma_3v_x,
\end{array}\right.
\eeq
where $\gamma_j\in\mathbb{R}$, $j=1,2,3$.\end{thm}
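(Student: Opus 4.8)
The plan is to follow Arnold's scheme for the Euler equation attached to the right-invariant metric \eqref{eq0.3}. Three objects must be made explicit. First, the inertia operator $A:\widehat{\mathcal G}\to\widehat{\mathcal G}^*_{reg}$ defined by $\langle\hat f,\hat g\rangle_{\mu}=\langle A\hat f,\hat g\rangle$, where the pairing on the right is the $L^2$-type pairing $\langle(m_1,m_2,\vec{m}_3),(g\tfrac{d}{dx},b,\vec\beta)\rangle=\int_{\mathbb S^1}(m_1g+m_2b)\,dx+\vec{m}_3\cdot\vec\beta$. Second, the coadjoint operator $\ad^*$ defined by $\langle\ad^*_{\hat u}m,\hat g\rangle=\langle m,[\hat u,\hat g]\rangle$. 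Third, since the metric is right-invariant, the Euler equation in the form $m_t=-\ad^*_{\hat u}m$ with $m=A\hat u$. Matching the three components of this identity with \eqref{eq1.3} is the whole content.

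First I would compute $A$. Writing $\hat u=(f\tfrac{d}{dx},v,\vec\alpha)$ and integrating by parts, $\mu(f)\mu(g)+\int f_xg_x\,dx=\int(\mu(f)-f_{xx})g\,dx$, while the remaining terms of \eqref{eq0.3} are already in $L^2$-form; hence $A\hat u=(\mu(f)-f_{xx},\,v,\,\vec\alpha)$ and $m=(\mu(f)-f_{xx},\,v,\,\vec\alpha)$. Because the mean $\mu(f)$ is conserved along the flow, $\partial_t m=(-f_{xxt},\,v_t,\,\dot{\vec\alpha})$, which already reproduces the two left-hand sides of \eqref{eq1.3}.

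Next I would make the bracket on $\widehat{\mathcal G}$ explicit. On the semidirect part $\hbox{Vect}^s(\mathbb S^1)\ltimes C^\infty(\mathbb S^1)$, with $C^\infty$ carrying the Lie-derivative (weight-$0$) action $f\tfrac{d}{dx}\cdot b=fb_x$, the bracket is $[(f\tfrac{d}{dx},v),(g\tfrac{d}{dx},b)]=((fg_x-gf_x)\tfrac{d}{dx},\,fb_x-gv_x)$, and the three-dimensional central extension contributes a triple of $2$-cocycles $(\omega_1,\omega_2,\omega_3)$. Pairing $[\hat u,\hat g]$ against $m$ and integrating by parts, the coefficient of the test field $g$ yields the first ($f$-)component of $\ad^*_{\hat u}m$ and the coefficient of $b$ yields the second ($v$-)component. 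The Virasoro-type term $\int m_1(fg_x-gf_x)\,dx$, after substituting $m_1=\mu(f)-f_{xx}$, produces exactly $2\mu(f)f_x-2f_xf_{xx}-ff_{xxx}$ on the right-hand side of the $f$-equation, while the semidirect term $\int v(fb_x-gv_x)\,dx$ produces $v_xv$ in the $f$-equation and $(vf)_x$ in the $v$-equation. The choice of the weight-$0$ action is essential here: it is precisely what turns the pairing into $(vf)_x$ rather than $fv_x$.

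The decisive observation is that the central component of $\ad^*_{\hat u}m$ vanishes (the test vector $\vec\beta$ does not enter the bracket), so $\dot{\vec\alpha}=0$ and $\vec\alpha$ is a constant vector $(\gamma_1,\gamma_2,\gamma_3)$; pairing this constant against the three cocycles is exactly what injects the $\gamma_j$-terms into the equation. The hard part will be the bookkeeping of these cocycles: one must select $\omega_1\sim\int f_xg_{xx}\,dx$, the mixed cocycle $\omega_2\sim\int(f_{xx}b-g_{xx}v)\,dx$, and the current-algebra cocycle $\omega_3\sim\int(vb_x-v_xb)\,dx$, and verify that their coadjoint contributions carry the correct signs and numerical factors, namely $-\gamma_1f_{xxx}$ and $+\gamma_2v_{xx}$ in the $f$-equation and $-\gamma_2f_{xx}+2\gamma_3v_x$ in the $v$-equation. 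With the overall sign of the Euler equation fixed by specialising to $v=0$, $\vec\gamma=0$ and comparing with the scalar $\mu$-HS equation \eqref{eq1.1}, assembling the three components then gives precisely \eqref{eq1.3}.
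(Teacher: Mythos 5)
Your proposal follows the same route as the paper: compute the inertia operator $\mathcal{A}\hat g=(\mu(g)-g_{xx},b,\vec\beta)$ from \eqref{eq0.3}, compute the coadjoint action by pairing against $[\hat u,\hat g]$ and integrating by parts, observe that the central component is invariant so that $\vec\gamma$ enters only through the three cocycles, and identify the three components with \eqref{eq1.3}. All of the individual terms you list ($2\mu(f)f_x-2f_xf_{xx}-ff_{xxx}$ from the Virasoro part with $m_1=\mu(f)-f_{xx}$, $vv_x$ and $(vf)_x$ from the semidirect part, and the stated $\gamma_j$-contributions) are exactly what the paper obtains. A side remark in your favor: your $\omega_3\sim\int(ab_x-a_xb)\,dx$ is the correct antisymmetric current-algebra cocycle; the paper's printed $2\int ab''\,dx$ is symmetric under $a\leftrightarrow b$ and must be a typo for $2\int ab'\,dx$, and only your version reproduces the $+2\gamma_3 v_x$ term.

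Two points need repair. First, you assert that $\mu(f)$ is conserved in order to replace $u_t$ by $-f_{xxt}$; this is not free. The paper derives it by first writing the Euler equation in the variable $u$, namely $u_t=2uf_x+u_xf+v_xv-\gamma_1f_{xxx}+\gamma_2v_{xx}$, and integrating over $\mathbb{S}^1$: every term on the right has zero mean (e.g. $\int(2uf_x+u_xf)\,dx=\int uf_x\,dx=\mu(f)\int f_x\,dx-\int f_{xx}f_x\,dx=0$), whence $\mu(f)_t=\mu(u)_t=0$ and only then $u_t=-f_{xxt}$. You should include this one-line computation rather than invoke conservation of the mean a priori. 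Second, your proposed calibration of the overall sign by ``specialising to $v=0$, $\vec\gamma=0$ and comparing with \eqref{eq1.1}'' would give the wrong answer: as the paper notes in the introduction, \eqref{eq1.3} reduces to \eqref{eq1.1} only after the additional time reversal $t\mapsto -t$, so matching \eqref{eq1.1} literally would flip the sign of the entire right-hand side and you would prove the time-reversed equation instead of \eqref{eq1.3}. The sign is in fact already determined by the conventions you fixed at the outset ($m_t=-\mathrm{ad}^*_{\hat u}m$ with $\langle \mathrm{ad}^*_{\hat u}m,\hat g\rangle=\langle m,[\hat u,\hat g]\rangle$, which coincides with the paper's $\hat u_t=\mathrm{ad}^*_{\mathcal{A}^{-1}\hat u}\hat u$ under its opposite-sign definition of $\mathrm{ad}^*$); drop the calibration step and simply carry those conventions through the computation.
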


Actually from the geometric view, if we extend the inner product \eqref{eq0.3} to a
left invariant metric on
 $\widehat{G}=\mathcal{D}^s(\mathbb{S}^1)\ltimes \hbox{C}^\infty (\mathbb{S}^1)\oplus \mathbb{R}^3$,
  we could view the  2-$\mu$HS equation \eqref{eq1.3} as a geodesic flow on  $\widehat{G}$
  with respect to this left invariant metric.  Obviously, if we choose $v=0$ and $\gamma_j=0$,
  $j=1,2,3$ and replace $t$ by $-t$, \eqref{eq1.3} reduces to \eqref{eq1.1}.
Furthermore, we show that

 \begin{thm} $[$\mbox{=Theorem \ref{Thm3.1} and \ref{Thm4.1}}$].$
 The 2-$\mu$HS equation \eqref{eq1.3} can be viewed as a bihamiltonian and bi-variational equation.
\end{thm}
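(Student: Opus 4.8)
The plan is to handle the two assertions in turn, in both cases leaning on the fact established above that \eqref{eq1.3} is the Euler equation of the metric \eqref{eq0.3} on $\widehat{\mathcal{G}}$. For the bihamiltonian claim I would first pass to momentum coordinates. The metric \eqref{eq0.3} has inertia operator $A=\mu(\cdot)-\partial_x^2$ on the $f$-slot and the identity on the $a$-slot, so I set $m=\mu(f)-f_{xx}$ and retain $v$; then $(m,v)$ are linear coordinates on $\widehat{\mathcal{G}}^*_{reg}$ and \eqref{eq1.3} becomes a system in $(m,v)$. The first Hamiltonian operator $J_1$ is the Lie--Poisson (Kirillov) structure, read off from the coadjoint action $\mathrm{ad}^*$ of $\mathrm{Vect}^s(\mathbb{S}^1)\ltimes C^\infty(\mathbb{S}^1)$ together with its three central cocycles. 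This yields an explicit $2\times2$ matrix of differential operators whose entries are linear in $(m,v)$ together with constant terms carrying $\gamma_1,\gamma_2,\gamma_3$ coming from the cocycles. With $\mathcal{H}_1=\tfrac12\langle\hat f,\hat f\rangle_\mu$ the energy, the general theory of Euler equations gives \eqref{eq1.3} in the form $\binom{m}{v}_t=J_1\,\delta\mathcal{H}_1$ almost for free, reproving the geometric theorem of Section~1.

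The substance lies in the second structure. I would seek a constant-coefficient operator $J_2$ obtained by freezing the Lie--Poisson pencil at a regular element $\hat f_0\in\widehat{\mathcal{G}}^*$, whose components are dictated precisely by $\gamma_1,\gamma_2,\gamma_3$. Compatibility, i.e. that $J_1+\lm J_2$ satisfies the Jacobi identity for every $\lm$, is then automatic, since the freezing of a Lie--Poisson structure at a point is always compatible with it (the argument-shift/Magri mechanism); no direct Jacobi computation is needed. It then remains to produce a second functional $\mathcal{H}_2$, which I expect to be cubic in the fields, with $\binom{m}{v}_t=J_2\,\delta\mathcal{H}_2$, found by matching the two representations term by term. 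This matching is the one genuinely computational step in the bihamiltonian half.

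For the bi-variational claim the idea is to exhibit \eqref{eq1.3} as the Euler--Lagrange system of two distinct action functionals. I would introduce velocity potentials, writing the dependent variables as derivatives of potentials $\phi,\psi$ (e.g. $f=\phi_x$ and $v=\psi_x$, up to the nonlocal $\mu$-contributions), so that each evolution equation becomes a critical-point condition $\delta S_i=0$ for an action $S_i=\iint L_i\,dx\,dt$. The two Lagrangian densities $L_1,L_2$ are the ones naturally attached to the two Hamiltonian structures of Theorem~\ref{Thm3.1} via a Legendre transform: each pair $(J_k,\mathcal{H}_k)$ determines a one-form whose $t$-primitive furnishes $L_k$. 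Checking that the Euler--Lagrange equations of $L_1$ and of $L_2$ both reduce to \eqref{eq1.3} is then a direct, if lengthy, verification.

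The main obstacle I anticipate is not any deep structural difficulty but rather the correct choice of ansätze together with careful bookkeeping of the central extension. Specifically, one must arrange the three constants $\gamma_j$ to appear consistently in $J_1$ (through the cocycles) and in the freezing point defining $J_2$, and one must correctly guess the cubic Hamiltonian $\mathcal{H}_2$ and the two Lagrangians $L_1,L_2$. Once these objects are written down, both the compatibility statement and the Euler--Lagrange verifications are essentially mechanical, so the real effort is concentrated in identifying the right second structure and the right potentials rather than in the subsequent calculations.
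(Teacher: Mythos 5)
Your plan matches the paper's proof in all essentials: the paper realizes the two Poisson structures exactly as the Lie--Poisson bracket (operator $\mathcal{J}_2$, paired with the quadratic energy $H_1$) and the bracket frozen at the point $(0,0,(1,0,\tfrac12))$ (the constant operator $\mathcal{J}_1=\mathrm{diag}(\p_x\Lambda,\p_x)$, paired with a cubic $H_2$), with compatibility obtained from the standard freezing argument via Proposition 5.3 of \cite{KM2003}, and it likewise exhibits two Lagrangians whose Legendre transforms recover $H_1$ and $H_2$. The only minor divergence is in Case I of the variational part, where the paper uses a Clebsch-type density with auxiliary multiplier fields $w,z$ rather than a plain velocity-potential substitution, but this is exactly the kind of ansatz adjustment you anticipate.
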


This paper is organized as follows. In section 2, we calculate the Euler equation
on $\widehat{\mathcal{G}}_{reg}^*$. In section 3,  we study the Hamiltonian nature and the Lax pair of the
2-$\mu$HS equation  \eqref{eq1.3}. Section 4 is devoted to discuss the variational nature of \eqref{eq1.3}.
In the last section we describe the interrelation between bihamiltonian natures and bi-variational natures.\\

\noindent {\bf Acknowledgement.} The author would like to thank
Prof. Khesin Boris and Prof. Partha Guha for
references \cite{OK1987} and \cite{Guha2008}, respectively, and
the anonymous referee for several useful suggestions.
 This work is partially supported by the Fundamental Research Funds for the Central
 Universities and NSFC(10971209,10871184).


\section{Eulerian nature of the 2-$\mu$HS equation}
Let $\mathcal{D}^s(\mathbb{S}^1)$  be a group of orientation preserving Sobolev $H^s$ diffeomorphisms of
the circle and let $T_{id}\mathcal{D}^s(\mathbb{S}^1)$ be the corresponding Lie algebra of vector fields, denoted by
$\hbox{Vect}^s(\mathbb{S}^1)=\{ f(x)\frac{d}{dx}|f(x)\in H^s(\mathbb{S}^1)\}$.

The main objects in our paper will be the group  $\mathcal{D}^s(\mathbb{S}^1)\ltimes
\hbox{C}^\infty (\mathbb{S}^1)$, its Lie algebra $\mathcal{G}=\hbox{Vect}^s(\mathbb{S}^1)
\ltimes \hbox{C}^\infty (\mathbb{S}^1)$ with the
Lie bracket given by
\beq [(f(x)\frac{d}{dx}, a(x)),(g(x)\frac{d}{dx}, b(x))]
=\left( (f(x)g'(x)-f'(x)g(x))\frac{d}{dx}, f(x)b'(x)-a'(x)g(x)\right),\nn \eeq
 and their central extensions.  It is well known in \cite{DM1970, AKP1988} that the algebra $\mathcal{G}$ has a
 three dimensional central extension given by the following nontrivial cocycles
\eqa
 && \omega_1\left( (f(x)\frac{d}{dx}, a(x)),(g(x)\frac{d}{dx}, b(x)) \right)=\int_{\mathbb{S}^1}f'(x)g''(x)dx, \\
  && \omega_2\left( (f(x)\frac{d}{dx}, a(x)),(g(x)\frac{d}{dx}, b(x)) \right)
  =\int_{\mathbb{S}^1}[f''(x)b(x)-g''(x)a(x)]dx,\nn \\
 && \omega_3\left( (f(x)\frac{d}{dx}, a(x)),(g(x)\frac{d}{dx}, b(x)) \right)=2\int_{\mathbb{S}^1}a(x)b''(x)dx, \nn
 \eeqa
 where $f(x)$, $g(x)\in H^s(\mathbb{S}^1)$ and  $a(x)$, $b(x) \in \hbox{C}^\infty (\mathbb{S}^1)$.
 Notice that the first cocycle
 $\omega_1$ is the well-known Gelfand-Fuchs cocycle \cite{GF1968,F1984}. The Virasoro algebra
 $Vir=\hbox{Vect}^s(\mathbb{S}^1)\oplus\mathbb{R}$
 is the unique non-trivial central extension of $\hbox{Vect}^s(\mathbb{S}^1)$ via the Gelfand-Fuchs cocycle $\omega_1$.
 Sometimes we would like to use the modified Gelfand-Fuchs cocycle
\beq \tilde{\omega}_1\left( (f(x)\frac{d}{dx}, a(x)),(g(x)\frac{d}{dx}, b(x)) \right)=
\int_{\mathbb{S}^1}(c_1f'(x)g''(x)+c_2 f'(x)g(x))dx,\eeq
which is cohomologeous to the Gelfand-Fuchs cocycle $\omega_1$, where $c_1$, $c_2\in \mathbb{R}$.

\begin{de}The algebra $\widehat{\mathcal{G}}$ is an extension of $\mathcal{G}$ defined by
\beq \widehat{\mathcal{G}}=\hbox{Vect}^s(\mathbb{S}^1)\ltimes \hbox{C}^\infty (\mathbb{S}^1)\oplus\mathbb{R}^3\eeq
with the commutation relation
\beq [\hat{f},\hat{g}]
=\left( (fg'-f'g)\frac{d}{dx}, fb'-a'g,\overrightarrow{\omega}\right),\eeq
where $\hat{f}=(f(x)\frac{d}{dx},a(x),\overrightarrow{\alpha})$,
$\hat{g}=(g(x)\frac{d}{dx},b(x),\overrightarrow{\beta})$
and $\overrightarrow{\alpha},\overrightarrow{\beta}\in \mathbb{R}^3$ and
$\overrightarrow{\omega}=(\omega_1,\omega_2,\omega_3)\in \mathbb{R}^3$.
\end{de}

Let
\beq \widehat{\mathcal{G}}_{reg}^*=\hbox{C}^\infty (\mathbb{S}^1)\oplus\hbox{C}^\infty
(\mathbb{S}^1)\oplus\mathbb{R}^3\eeq
denote the regular part of the dual space $\widehat{\mathcal{G}}^*$ to the Lie algebra
$\widehat{\mathcal{G}}$,
under the pairing
\beq \la \hat{u}, \hat{f}\ra^*=\int_{\mathbb{S}^1} (u(x)f(x)+a(x)v(x))dx+
\overrightarrow{\alpha}\cdot\overrightarrow{\gamma},\eeq
where $\hat{u}=(u(x)(dx)^2,v(x),\overrightarrow{\gamma})\in  \widehat{\mathcal{G}}^*$.
Of particular interest are the coadjoint orbits in  $\widehat{\mathcal{G}}_{reg}^*$.

On $\widehat{\mathcal{G}}$, let us introduce an inner product
\beq \la \hat{f},\hat{g} \ra_{\mu}=\mu(f)\mu(g)+\int_{\mathbb{S}^1}(f'(x)g'(x)+a(x)b(x))dx+
\overrightarrow{\alpha}\cdot\overrightarrow{\beta}.\label{eq2.7} \eeq
 A direct computation gives
\beq
\la \hat{f},\hat{g} \ra_{\mu}=\la \hat{f}, (\Lambda(g)(dx)^2, b(x),\overrightarrow{\beta})\ra^*,
\quad \Lambda(g)=\mu(g)-g''(x),\nn
\eeq
which induces an inertia operator
 $\mathcal{A}:\widehat{\mathcal{G}}\longrightarrow \widehat{\mathcal{G}}^*_{reg}$
given by
\beq \mathcal{A}(\hat{g})=(\Lambda(g)(dx)^2, b(x),\overrightarrow{\beta}).\eeq

\begin{thm}\label{Thm1.1}The 2-$\mu$HS equation \eqref{eq1.3} is an Euler equation on $\widehat{\mathcal{G}}_{reg}^*$
with respect to the inner product \eqref{eq2.7}. \end{thm}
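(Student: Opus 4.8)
The plan is to run Arnold's construction directly: for a Lie algebra equipped with a non-degenerate inertia operator $\mathcal{A}$, the Euler equation on the dual is $\frac{d}{dt}\hat m=\pm\,\ad^*_{\hat f}\hat m$, where $\hat m=\mathcal{A}(\hat f)$ is the momentum and $\ad^*$ is the coadjoint action; the sign is fixed by whether the metric is left- or right-invariant. Since $\mathcal{A}$ is already given explicitly by $\mathcal{A}(\hat g)=(\Lambda(g)(dx)^2,b,\overrightarrow{\beta})$ with $\Lambda(g)=\mu(g)-g''$, the entire proof reduces to (i) computing $\ad^*_{\hat f}$ explicitly on $\widehat{\mathcal{G}}^*_{reg}$ and (ii) substituting $\hat m=\mathcal{A}(\hat f)$ and reading off the two non-central components.

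First I would compute the coadjoint operator. Fixing $\hat f=(f\frac{d}{dx},a,\overrightarrow{\alpha})$ and $\hat u=(u(dx)^2,v,\overrightarrow{\gamma})$, I would determine $\ad^*_{\hat f}\hat u$ from the defining identity $\la \ad^*_{\hat f}\hat u,\hat g\ra^*=\la \hat u,[\hat f,\hat g]\ra^*$, valid for every $\hat g=(g\frac{d}{dx},b,\overrightarrow{\beta})$. Expanding the right-hand side with the bracket of $\widehat{\mathcal{G}}$ and the pairing $\la\cdot,\cdot\ra^*$, each term is brought into the form $\int(\cdots)g\,dx+\int(\cdots)b\,dx$ by integration by parts; the coefficients of $g$ and $b$ are then the first and second components of $\ad^*_{\hat f}\hat u$. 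The three cocycles $\omega_1,\omega_2,\omega_3$ are precisely what generate the derivative terms carrying $\gamma_1,\gamma_2,\gamma_3$: $\omega_1$ yields $\gamma_1 f'''$, $\omega_2$ produces the mixed $f''$ and $a''$ contributions, and $\omega_3$ the remaining $\gamma_3$-term. A structurally important observation is that the central ($\overrightarrow{\beta}$) slot of $[\hat f,\hat g]$ is independent of $\overrightarrow{\beta}$, so the $\overrightarrow{\gamma}$-component of $\ad^*_{\hat f}\hat u$ vanishes; hence $\overrightarrow{\alpha}$ is a Casimir, constant along the flow, and these three constants are exactly the parameters $\gamma_j$ appearing in \eqref{eq1.3}.

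Next I would substitute the momentum $\hat m=\mathcal{A}(\hat f)=((\mu(f)-f'')(dx)^2,a,\overrightarrow{\alpha})$, write $v:=a$ for the second component, and expand $\frac{d}{dt}\hat m=\pm\,\ad^*_{\hat f}\hat m$ componentwise. Writing $m_1=\mu(f)-f''$, so that $m_1'=-f_{xxx}$, the transport terms $-m_1'f-2m_1f'$ of the first component reorganize, up to the overall sign, into $-ff_{xxx}-2f_xf_{xx}+2\mu(f)f_x$, the quadratic coupling $-va'$ gives $v_xv$, and the cocycle terms supply $-\gamma_1f_{xxx}+\gamma_2v_{xx}$. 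The second component collapses to the conservation-law form $(vf)_x$ together with its $\gamma_2,\gamma_3$ contributions. Matching signs against \eqref{eq1.3} fixes the left/right-invariance convention and, after the substitution $t\mapsto-t$ noted in the introduction, recovers \eqref{eq1.1} when $v=0$ and $\gamma_j=0$.

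The part requiring the most care is the first equation's left-hand side. Differentiating the momentum gives $\frac{d}{dt}(\mu(f)-f'')=\mu(f_t)-f_{xxt}$, whereas \eqref{eq1.3} has only $-f_{xxt}$; the two agree precisely when $\mu(f)$ is conserved in time. I would therefore verify $\frac{d}{dt}\mu(f)=0$ directly, equivalently $\int_{\mathbb{S}^1}(\ad^*_{\hat f}\hat m)_1\,dx=0$, which follows from integrating the first component and checking that every summand is a total $x$-derivative or vanishes by antisymmetry (for instance $\int_{\mathbb{S}^1} aa'\,dx=0$). This conservation is the genuine content behind the nonlocal $\mu$-term and is the step I expect to be the main obstacle; the remainder is careful but routine integration-by-parts bookkeeping of the cocycle contributions.
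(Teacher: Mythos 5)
Your proposal is correct and follows essentially the same route as the paper: compute $ad^*_{\hat f}$ on $\widehat{\mathcal{G}}^*_{reg}$ by integration by parts against the bracket and the three cocycles, substitute the momentum $\hat u=\mathcal{A}(\hat f)$ with $u=\mu(f)-f_{xx}$, and use conservation of $\mu(f)$ (obtained by integrating the first component over $\mathbb{S}^1$) to replace $u_t$ by $-f_{xxt}$. The only cosmetic difference is your deferral of the sign convention in the defining relation for $ad^*$, which the paper fixes at the outset by taking $\la ad^*_{\hat f}\hat u,\hat g\ra^*=-\la\hat u,[\hat f,\hat g]\ra^*$.
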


\begin{proof}By definition,
\eqa &&\la ad^*_{\hat{f}}(\hat{u}),\hat{g} \ra^*=-\la \hat{u},[\hat{f},\hat{g}]\ra^* \qquad
\mbox{by using integration by parts} \nn \\
&&\quad=\la \left((2uf_x+u_xf+a_xv-\alpha_1f_{xxx}+\alpha_2a_{xx})(dx)^2,
(v f)_x-\alpha_2f_{xx}+2\alpha_3a_x, 0\right),\hat{g}\ra^*.\nn
 \eeqa
This gives
\beq ad^*_{\hat{f}}(\hat{u})=((2uf_x+u_xf+a_xv-\alpha_1f_{xxx}+\alpha_2a_{xx})(dx)^2,
(v f)_x-\alpha_2f_{xx}+2\alpha_3a_x, 0).\nn\eeq

By definition in \cite{KM2003},  the Euler equation on $\widehat{\mathcal{G}}_{reg}^*$ is given by
\beq \frac{d\hat{u}}{dt}=ad^*_{\mathcal{A}^{-1} \hat{u}} \hat{u}\label{eq2.9}\eeq
as an evolution of a point $\hat{u}\in \widehat{\mathcal{G}}_{reg}^*$. That is to say,
the Euler equation on $\widehat{\mathcal{G}}_{reg}^*$ is
\eqa\
&& u_t=2uf_x+u_xf+v_xv-\gamma_1f_{xxx}+\gamma_2v_{xx},\nn\\
&& v_t=(vf)_x-\gamma_2f_{xx}+2\gamma_3v_x,\nn
\eeqa
 where $u(x,t)=\Lambda(f(x,t))=\mu(f)-f_{xx}$. By integrating both sides of
 this equation over the circle and using periodicity, we obtain
  $$\mu(f_t)=\mu(f)_t=0.$$
  This yields that
\eqa
&& -f_{xxt}=2\mu(f)f_x-2f_xf_{xx}-ff_{xxx}+v_xv-\gamma_1f_{xxx}+\gamma_2v_{xx},\nn\\
&& \quad v_t=(vf)_x-\gamma_2f_{xx}+2\gamma_3v_x,\nn
\eeqa
which is the 2-$\mu$HS equation \eqref{eq1.3}. \end{proof}

\begin{rem}If we replace the Gelfand-Fuchs cocycle $\omega_1$ by the modified cocycle $\tilde{\omega}_1$,
the Euler equation
$\widehat{\mathcal{G}}_{reg}^*$ is of the form
\eqa
&& -f_{xxt}=2\mu(f)f_x-2f_xf_{xx}-ff_{xxx}+v_xv-\gamma_1c_1f_{xxx}+\gamma_2v_{xx}+\gamma_1c_2f_x,\nn\\
&&\quad v_t=(vf)_x-\gamma_2f_{xx}+2\gamma_3v_x.\nn
\eeqa
\end{rem}

\section{Hamiltonian nature of the 2-$\mu$HS equation}
In this section, we want to study the Hamiltonian nature of the 2-$\mu$HS equation \eqref{eq1.3}
and its geometric meaning.
We will show that

\begin{thm}\label{Thm3.1}The 2-$\mu$HS equation \eqref{eq1.3} is bihamiltonian.\end{thm}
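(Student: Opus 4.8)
The plan is to exhibit two compatible Hamiltonian (Poisson) operators for the system \eqref{eq1.3} and to display the equation in the bihamiltonian form
\[
\frac{d}{dt}\begin{pmatrix} u \\ v \end{pmatrix}
= J_1\,\delta H_2 = J_2\,\delta H_1,
\]
where $u=\mu(f)-f_{xx}=\Lambda(f)$ is the momentum variable introduced in the proof of Theorem~\ref{Thm1.1}. The first operator $J_1$ is the Lie--Poisson structure on $\widehat{\mathcal{G}}_{reg}^*$ coming directly from the coadjoint representation: since we already computed $ad^*_{\hat f}(\hat u)$ explicitly, the linear-in-$u,v$ operator reading off $ad^*$ is precisely $J_1$, and the Euler equation \eqref{eq2.9} is by construction $\dot{\hat u}=ad^*_{\mathcal{A}^{-1}\hat u}\hat u$, i.e.\ Hamiltonian with respect to $J_1$ and the energy Hamiltonian $H_1=\tfrac12\la\hat f,\hat f\ra_\mu$. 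So the first Hamiltonian structure essentially comes for free from Section~2.

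The substantive work is to produce the \emph{second} compatible structure $J_2$. First I would write $J_1$ as an explicit matrix differential operator in the two dependent variables $(u,v)$ (with the three central parameters $\gamma_1,\gamma_2,\gamma_3$ appearing as constant shifts), by transcribing the formula for $ad^*_{\hat f}(\hat u)$ and substituting $f=\Lambda^{-1}u$. Then I would seek $J_2$ as a second matrix operator with constant or $u,v$-dependent coefficients, guided by the known results for the scalar $\mu$-HS equation of \cite{BLG2008} and for the Virasoro/Bott--Virasoro structures of \cite{OK1987,KM2003}: the second structure there is typically a constant-coefficient operator (a freezing/translation of the Lie--Poisson bracket at a distinguished point), so I expect $J_2$ to arise by evaluating or deforming $J_1$ appropriately. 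Once a candidate $J_2$ is written down, I would find the second Hamiltonian $H_2$ by solving $J_2\,\delta H_2 = J_1\,\delta H_1 = \dot{\hat u}$, reading off $\delta H_2$ and integrating it back to a functional.

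The two checks that remain are (i) that each $J_i$ is a genuine Poisson operator, i.e.\ skew-symmetric with vanishing Schouten bracket (Jacobi identity), and (ii) that $J_1$ and $J_2$ are \emph{compatible}, meaning every pencil $J_1+c\,J_2$ is again Poisson. For the skew-symmetry I would integrate by parts on $\mathbb{S}^1$ and use periodicity; the Jacobi identity I would verify either by the standard functional-trihamiltonian computation or, more cheaply, by invoking that $J_1$ is automatically Poisson as a Lie--Poisson structure and that $J_2$, being a constant-coefficient (frozen) operator, is Poisson for the same reason, with compatibility following from the fact that freezing a Lie--Poisson bracket at a point always yields a compatible pair. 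This last point is the main obstacle: the Jacobi/compatibility verification is the only genuinely nontrivial algebra, and the cleanest route is to identify $J_2$ as a frozen coefficient version of $J_1$ so that both Poisson property and compatibility are inherited rather than recomputed by brute force. If that identification does not produce a constant-coefficient $J_2$ cleanly because of the central terms $\gamma_j$, I would instead verify compatibility directly by checking that the Schouten bracket $[J_1,J_2]$ vanishes, localizing the computation to the finitely many coefficient interactions between the two operators.
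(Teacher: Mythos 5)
Your proposal is correct and follows essentially the same route as the paper: exhibit the pair of operators together with the Hamiltonians $H_1$, $H_2$, identify one structure as the canonical Lie--Poisson bracket \eqref{eq3.5} underlying the Euler equation of Theorem~\ref{Thm1.1} and the other as a bracket frozen at a suitable point, and deduce compatibility from the general fact (Proposition~5.3 of \cite{KM2003}) that a Lie--Poisson bracket and any of its freezings are compatible. The only cosmetic differences are your labeling (your $J_1$ is the paper's $\mathcal{J}_2$ and vice versa, which makes your prose and your displayed formula momentarily inconsistent) and that the paper additionally writes out the explicit matrix operators and the freezing point $\hat u_0=(0,0,(1,0,\tfrac12))$.
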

\begin{proof}
Let us define  $u(x,t)=\Lambda(f)=\mu(f)-f_{xx}$ and
\beq H_1=\frac{1}{2}\int_{\mathbb{S}^1}(u f+v^2)dx \label{eq3.1} \eeq
and
\beq H_2=\int_{\mathbb{S}^1} (\mu(f)f^2+\frac{1}{2}ff_x^2+\frac{1}{2}fv^2-
\gamma_2vf_x+\gamma_3v^2-\frac{\gamma_1}{2}ff_{xx})dx. \label{eq3.2}\eeq
It is easy to check  that the 2-$\mu$HS equation can be written as
\beq \left(\begin{array}{l} u\\v\end{array}\right)_t=\mathcal{J}_1
\left(\begin{array}{l} \frac{\delta H_2}{\delta u}\\
\frac{\delta H_2}{\delta v}\end{array}\right)=\mathcal{J}_2
\left(\begin{array}{l} \frac{\delta H_1}{\delta u}\\
\frac{\delta H_1}{\delta v}\end{array}\right),\eeq
where the Hamiltonian operators are
\beq
\mathcal{J}_1=\left(\begin{array}{cc}
\p_x \Lambda & 0\\
0& \p_x
\end{array}\right), \quad \mathcal{J}_2=\left(\begin{array}{cc}
 u\p_x+\p_xu-\gamma_1\p_x^3&v\p_x+\gamma_2\p_x^2\\
\p_xv-\gamma_2\p^2_x& 2\gamma_3\p_x
\end{array}\right).
\eeq
By a direct and lengthy calculation we could show that Hamiltonian operators $\mathcal{J}_1$ and $\mathcal{J}_2$
are compatible. \end{proof}

Next we want to explain the geometric meaning of the bihamiltonian structures of the 2-$\mu$HS
equation \eqref{eq1.3}. Let $F_i:\widehat{\mathcal{G}}_{reg}^* \to \mathbb{R}$, $i=1,2$,
be two arbitrary smooth functionals.
It is well-known that the dual space $\widehat{\mathcal{G}}_{reg}^*$ carries the canonical Lie-Poisson bracket
\beq \{F_1,F_2\}_2(\hat{u})=\la \hat{u}, [\frac{\delta F_1}{\delta \hat{u}},
\frac{\delta F_2}{\delta \hat{u}}] \ra^*,\label{eq3.5} \eeq
where $\hat{u}=(u(x,t)(dx)^2,v(x,t),\vec{\gamma}) \in \widehat{\mathcal{G}}_{reg}^*$
and $ \frac{\delta F_i}{\delta \hat{u}}
=(\frac{\delta F_i}{\delta u},\frac{\delta F_i}{\delta v}, \frac{\delta F_i}{\delta \vec{\gamma}})
\in \widehat{\mathcal{G}},i=1,2$.
By definition of the Euler
equation \eqref{eq2.9}, we know that the Lie-Poisson structure \eqref{eq3.5} is exactly the second
Poisson bracket, induced by $\mathcal{J}_2$,
of the 2-$\mu$HS equation \eqref{eq1.3}.

To explain the first Hamiltonian structure, in the following we will use  the ``frozen  Lie-Poisson''
 method introduced in \cite{KM2003}. Let us define
a frozen (or constant) Poisson bracket
\beq \{F_1,F_2\}_1(\hat{u})=\la \hat{u}_0, [\frac{\delta F_1}{\delta \hat{u}},
\frac{\delta F_2}{\delta \hat{u}}] \ra^*,\label{eq3.6} \eeq
where $\hat{u}_0=(u_0(dx)^2,v_0,\vec{\gamma}_0) \in \widehat{\mathcal{G}}_{reg}^*$.
The corresponding Hamiltonian equation for
any functional $F:\widehat{\mathcal{G}}_{reg}^* \to \mathbb{R}$ reads
  \beq \frac{d\hat{u}}{dt}=ad^*_{\frac{\delta F}{\delta \hat{u}}} \hat{u}_0 \label{eq3.7}\eeq
which gives
\eqa\
&& u_t=2u_0(\frac{\delta F}{\delta {u}})_x+(\frac{\delta F}{\delta {v}})_xv_0
-\gamma_1^0 (\frac{\delta F}{\delta {u}})_{xxx}+\gamma_2^0(\frac{\delta F}{\delta {v}})_{xx},\nn\\
&& v_t=(v_0\frac{\delta F}{\delta {u}})_x-\gamma_2^0(\frac{\delta F}{\delta {u}})_{xx}
+2\gamma_3^0(\frac{\delta F}{\delta {v}})_x,\label{eq3.8}\\
&&\vec{\gamma}_{0,t}=0.\nn
\eeqa
 Let us take the Hamiltonian functional $F$ to be
 \beq  H_2=\int_{\mathbb{S}^1} (\mu(f)f^2+\frac{1}{2}ff_x^2+\frac{1}{2}fv^2-
\gamma_2vf_x+\gamma_3v^2-\frac{\gamma_1}{2}ff_{xx})dx \eeq
and set $u(x,t)=\Lambda(f(x,t))=\mu(f)-f_{xx}$.  Then we have
\eqa && \frac{\delta F}{\delta {u}}=\Lambda^{-1}(\mu(f^2)+2f\mu(f)-\frac{1}{2}f_x^2-ff_{xx}
-\gamma_1f_{xx}+\gamma_2v_x),\nn\\
&& \frac{\delta F}{\delta {v}}=vf-\gamma_2f_x+2\gamma_3v.\label{eq3.10}\eeqa
Let us choose a fixed point
$$\hat{u}_0=(u_0,v_0,\vec{\gamma}_0)=(0,0, (1,0,\frac{1}{2})).$$
Observe that $\p_x^3 \Lambda^{-1}=-\p_x$. By substituting \eqref{eq3.10} into \eqref{eq3.8}, we
obtain the 2-$\mu$HS equation \eqref{eq1.3}.
According to the {\bf Proposition 5.3} in \cite{KM2003}, $\{~,~\}_1$ and $\{~,~\}_2$ are
compatible for every freezing point $\hat{u}_0$. Consequently we have

\begin{thm}The 2-$\mu$HS equation \eqref{eq1.3} is Hamiltonian with respect to
two compatible Poisson structures \eqref{eq3.5} and \eqref{eq3.6} on $\widehat{\mathcal{G}}_{reg}^*$,
where the first bracket is frozen at the point $\hat{u}_0=(u_0,v_0,\vec{\gamma}_0)=(0,0, (1,0,\frac{1}{2}))$.\end{thm}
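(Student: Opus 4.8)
The strategy is to recognize that the statement merely packages three facts, two of which have been essentially assembled in the discussion above, and to supply the geometric identification of the two Hamiltonian operators with the Lie--Poisson brackets \eqref{eq3.5} and \eqref{eq3.6}. Precisely, I would prove: (i) that the canonical Lie--Poisson bracket \eqref{eq3.5} produces the operator $\mathcal{J}_2$, so that \eqref{eq1.3} is Hamiltonian with Hamiltonian $H_1$; (ii) that the frozen bracket \eqref{eq3.6} at the point $\hat u_0=(0,0,(1,0,\frac{1}{2}))$ produces $\mathcal{J}_1$, so that \eqref{eq1.3} is Hamiltonian with Hamiltonian $H_2$; and (iii) that the two brackets are compatible.

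For (i), I would read off the coadjoint action. The Hamiltonian equation of the linear bracket \eqref{eq3.5} for a functional $F$ is the Euler-type flow $\dot{\hat u}=ad^*_{\delta F/\delta\hat u}\,\hat u$, and the explicit formula for $ad^*_{\hat f}(\hat u)$ computed in the proof of Theorem~\ref{Thm1.1} shows that, written in the coordinates $(u,v)$ on $\widehat{\mathcal{G}}_{reg}^*$, this flow is exactly $\mathcal{J}_2$ applied to $(\delta F/\delta u,\delta F/\delta v)$. Taking $F=H_1$ then reproduces \eqref{eq1.3}; this step is a direct transcription of the $ad^*$ formula and needs no new computation.

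The substance of the proof is (ii). Here I would first compute the variational derivatives $\delta H_2/\delta u$ and $\delta H_2/\delta v$. The delicate point is the passage from $\delta/\delta f$ to $\delta/\delta u$ induced by the change of variables $u=\Lambda(f)=\mu(f)-f_{xx}$: since $\Lambda$ is formally self-adjoint, one has $\delta H_2/\delta u=\Lambda^{-1}(\delta H_2/\delta f)$, which introduces the nonlocal operator $\Lambda^{-1}$ and the mean $\mu(\cdot)$ into $\delta H_2/\delta u$. I would then substitute these derivatives into the frozen Hamiltonian equation \eqref{eq3.8}, specialized to $u_0=v_0=0$, $\gamma_1^0=1$, $\gamma_2^0=0$, $\gamma_3^0=\frac{1}{2}$. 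At this freezing point almost all terms of \eqref{eq3.8} drop out, leaving $u_t=-\p_x^3(\delta H_2/\delta u)$ and $v_t=\p_x(\delta H_2/\delta v)$, and the key identity $\p_x^3\Lambda^{-1}=-\p_x$ converts the former into $u_t=\p_x(\delta H_2/\delta f)$, a purely local expression. \emph{This is the step I expect to be the main obstacle}: one must check term-by-term that this reproduces \eqref{eq1.3}, the most subtle contribution being the quadratic coupling, where the cubic term $\frac{1}{2}\int_{\mathbb{S}^1} fv^2\,dx$ in $H_2$ contributes $\frac{1}{2}v^2$ to $\delta H_2/\delta f$ and hence, after $\p_x$, the nonlinear coupling $vv_x$ on the right-hand side of the first equation of \eqref{eq1.3}. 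The correctness of the whole identification hinges on the precise choice of $\hat u_0$ and on the cancellation engineered by $\p_x^3\Lambda^{-1}=-\p_x$.

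Finally, for (iii) I would invoke Proposition~5.3 of \cite{KM2003}, which asserts that a constant (frozen) Lie--Poisson bracket is compatible with the underlying linear Lie--Poisson bracket for every freezing point. Since \eqref{eq3.5} is the linear bracket and \eqref{eq3.6} is its freeze at $\hat u_0$, compatibility holds automatically; combined with (i) and (ii) this exhibits \eqref{eq1.3} as a bi-Hamiltonian system with respect to \eqref{eq3.5} and \eqref{eq3.6}. This last step is a citation and requires no additional work.
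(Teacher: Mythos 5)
Your proposal follows the paper's own argument essentially verbatim: you identify the Lie--Poisson bracket \eqref{eq3.5} with the second Hamiltonian structure via the $ad^*$ formula from the proof of Theorem \ref{Thm1.1}, verify that the frozen bracket \eqref{eq3.6} at $\hat u_0=(0,0,(1,0,\tfrac12))$ with Hamiltonian $H_2$ reproduces \eqref{eq1.3} by means of the identity $\p_x^3\Lambda^{-1}=-\p_x$, and cite Proposition 5.3 of \cite{KM2003} for compatibility, exactly as the paper does. Your observation that the term $\tfrac12\int_{\mathbb{S}^1} fv^2\,dx$ contributes $\tfrac12 v^2$ to $\delta H_2/\delta f$, and hence $vv_x$ after applying $\p_x$, is correct and in fact supplies a summand omitted from the paper's displayed formula \eqref{eq3.10}.
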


 Let us point out that the constant bracket depends on the choice of the freezing
point $\hat{u}_0$, while the Lie-Poisson bracket is only determined by the Lie algebra structure.

To this end we want to derive a Lax pair of 2-$\mu$HS equation \eqref{eq1.3} with $\overrightarrow{\gamma}=0$, i.e.,
\beq \label{eq3.11}
 -f_{xxt}=2\mu(f)f_x-2f_xf_{xx}-ff_{xxx}+v_xv,\quad
v_t=(vf)_x.
\eeq
Motivated by the Lax pair of the two-component Camassa-Holm equation in \cite{CLZ2006}, we
could assume that the Lax pair of \eqref{eq3.11} has the following form
\beq \Psi_x=U\Psi, \quad \Psi_t=V\Psi \label{eq3.12}\eeq
with
\beq U=\left(\begin{array}{cc}
0&1\\
\lambda \Lambda(f)-\lambda^2 v^2&0
\end{array}\right)
\quad \mbox{and}\quad
V=\left(\begin{array}{cc}
p & r\\
q & -p
\end{array}\right),\nn \eeq
where $\lambda$ is a spectral parameter. The compatibility condition $$U_t-V_x+UV-VU=0$$
in componentwise form reads
\eqa
 && p=-\frac{r_x}{2}, \quad q=p_x+r(\lambda \Lambda(f)-\lambda^2 v^2),\nn\\
&& 2\lambda^2vv_t+\lambda f_{xxt}+q_x-2p(\lambda \Lambda(f)-\lambda^2 v^2)=0.\nn
\eeqa
By choosing $r=f-\frac{1}{2\lambda}$, we have
$$p=-\frac{f_x}{2},\quad q=-\frac{f_{xx}}{2}+(f-\frac{1}{2\lambda})(\lambda \Lambda(f)-\lambda^2 v^2)$$
and
$$f_{xxt}+2\mu(f)f_x-2f_xf_{xx}-ff_{xxx}+v_xv+2\lambda v(
v_t-(vf)_x)=0$$
which yields the system  \eqref{eq3.11}. Let us write
$\Psi=\left(\begin{array}{c}\psi
\\ \psi_x\end{array}\right),$
we have

\begin{prop} The system \eqref{eq3.11} has a Lax pair given by
 \beq
 \psi_{xx}=(\lambda \Lambda(f)-\lambda^2 v^2)\psi, \quad
 \psi_t=(f-\frac{1}{2\lambda})\psi_x-\frac{1}{2}f_x\psi, \nn
 \eeq where $\lambda \in \mathbb{C}-\{0\}$ is a spectral parameter.
 \end{prop}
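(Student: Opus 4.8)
The plan is to run the zero-curvature argument in reverse: treat the proposed scalar pair as given and show that its compatibility reproduces \eqref{eq3.11}. First I would recast the spectral problem $\psi_{xx}=(\lambda\Lambda(f)-\lambda^2v^2)\psi$ as a first-order matrix system by setting $\Psi=(\psi,\psi_x)^{T}$, which immediately yields $\Psi_x=U\Psi$ with the $U$ displayed in \eqref{eq3.12}. The time evolution $\psi_t=(f-\frac{1}{2\lambda})\psi_x-\frac12 f_x\psi$ then fixes the top row of $V$, namely $r=f-\frac{1}{2\lambda}$ and $p=-\frac12 f_x$; the bottom row comes from differentiating this relation once in $x$ and using the spectral equation to eliminate $\psi_{xx}$, giving $(\psi_x)_t=q\psi-p\psi_x$. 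In this way the scalar pair is equivalent to the matrix pair $\Psi_x=U\Psi$, $\Psi_t=V\Psi$, so it suffices to analyze the latter.

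Next I would impose the compatibility condition $U_t-V_x+UV-VU=0$ and read off its four entries, using $\Lambda(f)=\mu(f)-f_{xx}$ together with $\mu(f)_t=0$ (so that $\Lambda(f)_t=-f_{xxt}$). The $(1,2)$ and $(1,1)$ components reproduce precisely the algebraic relations $p=-r_x/2$ and $q=p_x+r(\lambda\Lambda(f)-\lambda^2v^2)$ already used to build $V$, and the $(2,2)$ entry is merely their trace-zero duplicate; the only genuine constraint therefore comes from the $(2,1)$ entry, which collapses to the single scalar identity
\[
2\lambda^2vv_t+\lambda f_{xxt}+q_x-2p(\lambda\Lambda(f)-\lambda^2v^2)=0.
\]
Substituting $p$, $q$ and $r$ and expanding turns this into the polynomial-in-$\lambda$ relation
\[
f_{xxt}+2\mu(f)f_x-2f_xf_{xx}-ff_{xxx}+v_xv+2\lambda v\bigl(v_t-(vf)_x\bigr)=0.
\]

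The decisive step, and the one I expect to be the main obstacle, is the separation by powers of $\lambda$: since the spectral parameter is free, this last identity must hold coefficient by coefficient. The $\lambda^0$ part gives $-f_{xxt}=2\mu(f)f_x-2f_xf_{xx}-ff_{xxx}+v_xv$, the first equation of \eqref{eq3.11}, while the $\lambda^1$ part forces $v_t=(vf)_x$, the second equation. The subtlety is that the ansatz $r=f-\frac{1}{2\lambda}$ deliberately injects the $\lambda^{-1}$ term precisely so that, after the various products are expanded, the $v$-equation emerges in advection form at the correct order while all spurious $\lambda^{-1}$ and $\lambda^2$ contributions cancel; checking that no residual terms survive at those orders is the part requiring genuine care. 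Once the two equations are matched, reversing the reduction $\Psi=(\psi,\psi_x)^{T}$ recovers the scalar pair and completes the proof.
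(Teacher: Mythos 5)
Your proposal is correct and follows essentially the same route as the paper: the same matrix pair $\Psi_x=U\Psi$, $\Psi_t=V\Psi$, the same ansatz $r=f-\frac{1}{2\lambda}$ giving $p=-\frac{1}{2}f_x$ and $q=p_x+r(\lambda\Lambda(f)-\lambda^2v^2)$, and the same reduction of the zero-curvature condition to the single identity $f_{xxt}+2\mu(f)f_x-2f_xf_{xx}-ff_{xxx}+v_xv+2\lambda v(v_t-(vf)_x)=0$, separated by powers of $\lambda$. The only difference is one of narration (you run the equivalence from the scalar pair back to the system, the paper constructs $V$ and then observes the identity yields \eqref{eq3.11}), which does not change the computation.
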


\section{Variational nature of the 2-$\mu$HS equation}
In \cite{BLG2008}, they have shown that the $\mu$-HS equation \eqref{eq1.1}
can be obtained from two distinct variational principles.
In this section we will show that the 2-$\mu$HS equation \eqref{eq1.3} also
arises as the equation
    $$\delta \mathcal{S}=0$$
for the action functional
$$ \mathcal{S}=\int(\int\mathcal{L}dx)dt$$
with two different densities $\mathcal{L}$. That is to say,

\begin{thm}\label{Thm4.1} The 2-$\mu$HS equation \eqref{eq1.3} satisfies two different
variational principles.\end{thm}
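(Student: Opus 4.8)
The plan is to exhibit two distinct Lagrangian densities $\mathcal{L}$ whose Euler--Lagrange equations reproduce the 2-$\mu$HS system \eqref{eq1.3}, mirroring the two variational principles established for the scalar $\mu$-HS equation in \cite{BLG2008}. The guiding idea is that the Hamiltonian structures found in Theorem \ref{Thm3.1} are the natural source of variational formulations: each Poisson operator $\mathcal{J}_i$, together with its conjugate Hamiltonian, suggests a corresponding action. First I would recall that, as in the scalar case, it is convenient to work with the potential variable $f$ (with $u=\Lambda(f)=\mu(f)-f_{xx}$) rather than with $(u,v)$ directly, so that the evolution $-f_{xxt}=\dots$ becomes a genuine Euler--Lagrange equation in $f$ and $v$.

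For the \emph{first} principle, I would introduce a velocity potential by setting $f=\varphi_x$ (or an analogous substitution for the second component), so that the left-hand side $-f_{xxt}$ becomes a total $x$-derivative of a time-derivative term and can be matched with $\varphi_{xt}$-type contributions in $\delta\mathcal{S}=0$. The candidate density will be a combination of a kinetic term quadratic in the time derivatives, dictated by the inner product \eqref{eq2.7}, and a potential term built from $\mu(f)f^2$, $ff_x^2$, $fv^2$, and the $\gamma_j$-terms appearing in $H_2$ of \eqref{eq3.2}. The computation then consists of writing out $\frac{\delta\mathcal{S}}{\delta\varphi}=0$ and $\frac{\delta\mathcal{S}}{\delta v}=0$ and checking, after integration by parts and undoing the potential substitution, that one recovers exactly \eqref{eq1.3}.

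For the \emph{second} principle, I would instead use a density that is \emph{linear} in the time derivatives (a Clebsch- or phase-space-type Lagrangian, $\mathcal{L}=\,$(momentum)$\cdot\partial_t$(field)$\,-\,\mathcal{H}$), where the Hamiltonian part is taken from $H_1$ in \eqref{eq3.1} and the symplectic part is read off from the first Poisson operator $\mathcal{J}_1$. Because $\mathcal{J}_1=\mathrm{diag}(\partial_x\Lambda,\partial_x)$ is a constant-coefficient operator, its inverse kernel gives a clean bilinear pairing, and the variation $\delta\mathcal{S}=0$ should reproduce the system in the form $(u,v)_t=\mathcal{J}_1\,\delta H_2/\delta(u,v)$. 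Here the role of $H_1$ and $H_2$ is simply interchanged relative to the first principle, which is precisely the ``bi-variational'' phenomenon the theorem asserts and which dovetails with the interrelation between bihamiltonian and bi-variational natures promised for the final section.

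The main obstacle I anticipate is the nonlocal operator $\Lambda^{-1}$: the momentum $u=\mu(f)-f_{xx}$ involves the mean $\mu(f)$, so the relation between $f$ and $u$ is only invertible modulo the zero mode, and the potential substitution $f=\varphi_x$ interacts delicately with the $\mu(f)$-term. Keeping careful track of the constant-in-$x$ (mean) contributions under integration by parts over $\mathbb{S}^1$—and verifying that the boundary/zero-mode terms vanish by periodicity, exactly as in the derivation of $\mu(f)_t=0$ in the proof of Theorem \ref{Thm1.1}—will be the technical heart of the argument. Once the correct handling of $\mu(f)$ and $\Lambda^{-1}$ is fixed, matching the two densities to \eqref{eq1.3} is a direct, if lengthy, variational calculation.
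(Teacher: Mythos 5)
The substance of this theorem is the exhibition of two explicit Lagrangian densities whose Euler--Lagrange equations reproduce \eqref{eq1.3}, and your proposal never produces them; moreover, the part of the plan that is specific is inconsistent. For your phase-space (Clebsch-type) principle you pair the symplectic structure coming from $\mathcal{J}_1$ with the Hamiltonian $H_1$ of \eqref{eq3.1}; but that pair generates the flow $(u,v)_t=\mathcal{J}_1\,\delta H_1/\delta(u,v)$, which is \emph{not} the 2-$\mu$HS equation. Since \eqref{eq1.3} is $\mathcal{J}_1\,\delta H_2=\mathcal{J}_2\,\delta H_1$, a Lagrangian linear in time derivatives whose symplectic part encodes $\mathcal{J}_1^{-1}$ must carry $H_2$, not $H_1$ --- and this is exactly what the paper's Case II does: $\mathcal{L}_2=-f_xf_t-\phi_x\phi_t+2\mu(f)f^2+ff_x^2+f\phi_x^2-\gamma_1ff_{xx}-2\gamma_2\phi_xf_x+2\gamma_3\phi_x^2$ with $v=\phi_x$, whose Euler--Lagrange equation $-f_{xt}=\Lambda(\delta H_2/\delta u)$ yields \eqref{eq1.3} after one $x$-differentiation. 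Your sentence ``the variation should reproduce $(u,v)_t=\mathcal{J}_1\,\delta H_2/\delta(u,v)$'' contradicts your own choice of Hamiltonian, and as written the construction would fail.

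For the other principle, a kinetic term quadratic in time derivatives ``dictated by the inner product \eqref{eq2.7}'' after a potential substitution $f=\varphi_x$ is the Euler--Poincar\'e/geodesic picture; in Eulerian variables that requires \emph{constrained} variations of the form $\delta\hat{u}=\partial_t\hat{w}+[\hat{u},\hat{w}]$ and is not an unconstrained $\delta\mathcal{S}=0$ of the kind the theorem asserts, and you give no mechanism at all for generating the second equation $v_t=(vf)_x-\gamma_2f_{xx}+2\gamma_3v_x$ from such a substitution. The paper instead introduces auxiliary fields $w,z$ and the density $\mathcal{L}_1=\frac{1}{2}f_x^2+\frac{1}{2}\mu(f)f+\frac{1}{2}v^2-vz_x+w(fz_x-z_t+\tilde{\gamma}_3v)+\gamma_2 w_xf-2\gamma_1 f$; varying in $f$ gives the \emph{constraint} $f_{xx}=\mu(f)+wz_x+\gamma_2w_x-2\gamma_1$ (so $f$ is fixed by the multipliers rather than by a potential substitution), and eliminating $w,z$ together with the shift $f\mapsto f+\gamma_1$ recovers \eqref{eq1.3}, the associated conserved energy being $H_1$. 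Without these explicit densities, and with the $\mathcal{J}_1$/$H_1$ mismatch uncorrected, the proposal does not establish the theorem.
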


\begin{proof} Motivated by the Lagrangian densities for the
$\mu$-HS equation \eqref{eq1.1} in \cite{BLG2008}, by some conjectural computations
we find two generalized Lagrangian densities for the 2-$\mu$HS equation \eqref{eq1.3}.
More precisely,\\

\noindent{\bf Case I}. Let us consider the first Lagrangian density
\beq \mathcal{L}_1=\frac{1}{2}f_x^2+\frac{1}{2}\mu(f)f+\frac{1}{2}v^2-vz_x+w(fz_x-z_t+
\tilde{\gamma}_3v)+\gamma_2 w_x f-2\gamma_1 f,\eeq
where $\tilde{\gamma}_3=\gamma_3-\frac{1}{2}\gamma_1$.
Varying the corresponding action with respect to $f$, $v$, $w$ and $z$ respectively, we get
\beq\label{eq2.16}
\begin{array}{l}
f_{xx}=\mu(f)+wz_x+\gamma_2 w_x-2\gamma_1,\\
z_x=v+\tilde{\gamma}_3 w,\\
z_t=fz_x+\tilde{\gamma}_3v-\gamma_2f_x,\\
w_t=(wf)_x-v_x. \end{array}
\eeq
By using \eqref{eq2.16}, we have
\eqa
&&v_t=z_{xt}-\gamma_3 w_t=[f(v+\tilde{\gamma}_3 w)+\tilde{\gamma}_3v-\gamma_2f_x]_x-\tilde{\gamma}_3 ((wf)_x-v_x),\nn\\
&&\quad=(vf)_x-\gamma_2f_{xx}+(2 {\gamma}_3-\gamma_1)v_{x},\label{eq2.17}
\eeqa
and
\eqa
&&-f_{xxt}+f_xf_{xx}+ff_{xxx}\nn\\
&&\qquad =-(\mu(f)+wz_x+\gamma_2 w_x)_t+f_x(\mu(f)+wz_x+\gamma_2 w_x-2\gamma_1)+f(\mu(f)+wz_x+\gamma_2 w_x)_x\nn\\
&&\qquad=-w_{t}z_x-wz_{xt}+\gamma w_{xt}+f_xwz_x+fw_xz_x+fwz_{xx}+\gamma_2fw_{xx}+2\gamma_1f_x\nn\\
&&\qquad=vv_x+2\mu(f)f_x+\gamma_2v_{xx}-2\gamma_1f_x.\label{eq2.18}
\eeqa
Notice that if we replace $f$ by $f+\gamma_1$ in the system \eqref{eq2.17} and \eqref{eq2.18}, this
gives the 2-$\mu$HS equation \eqref{eq1.3}.\\

\noindent {\bf Case II}. The second variational representation can be obtained from the Lagrangian density
\beq \mathcal{L}_2=-f_xf_t+2\mu(f)f^2+ff_x^2+f\phi_x^2-\gamma_1ff_{xx}-2\gamma_2\phi_xf_x+2\gamma_3\phi_x^2
-\phi_x\phi_t.\eeq
The variational principle $\delta \mathcal{S}=0$ gives the Euler-Lagrange equation
\beq \label{eq2.15}\begin{array}{l}
 -f_{xt}=2\mu(f)f+\mu(f^2)-\frac{1}{2}f_x^2-ff_{xx}+
\frac{1}{2}\phi_x^2-\gamma_1f_{x}+\gamma_2\phi_{xx},\\
~~~\phi_{xt}=(f\phi_x)_x-\gamma_2f_{xx}+2\gamma_3\phi_{xx}.
\end{array}
\eeq
If we set $\phi_x=v$ and take the $x$-derivative of the first term in \eqref{eq2.15}, this yields
the 2-$\mu$HS equation \eqref{eq1.3}.\end{proof}

\section{Relation between Hamiltonian nature and Variational nature} Recall that we have shown that
the 2-$\mu$HS equation \eqref{eq1.3} is bihamiltonian  and has two different variational principles.
In the last section we want to study the relation between Hamiltonian natures and bi-variational
principles and prove that

\begin{thm} The two variational formulations for the 2-$\mu$HS equation \eqref{eq1.3} formally correspond
to the two Hamiltonian formulations of this equation with Hamiltonian functionals
$H_1$ and $H_2$. \end{thm}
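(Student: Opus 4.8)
The plan is to make precise the word ``formally correspond'' by comparing, for each of the two variational densities, the conserved quantity produced by the Lagrangian against the Hamiltonian functionals $H_1$ and $H_2$ from \eqref{eq3.1} and \eqref{eq3.2}, and by matching the Poisson structures that each formulation carries. First I would treat Case II, which I expect to be the cleaner correspondence. The density $\mathcal{L}_2$ is linear in the time derivatives $f_t$ and $\phi_t$ (the terms $-f_xf_t$ and $-\phi_x\phi_t$), so it is already in a ``phase-space'' or first-order form; such Lagrangians encode a symplectic/Poisson bracket directly through their kinetic part. I would read off the canonical one-form from the $-f_xf_t-\phi_x\phi_t$ terms, identify the symplectic structure it induces, and check by a Legendre-type transform that the remaining (time-independent) part of $\mathcal{L}_2$, namely $2\mu(f)f^2+ff_x^2+f\phi_x^2-\gamma_1ff_{xx}-2\gamma_2\phi_xf_x+2\gamma_3\phi_x^2$, reproduces $H_2$ after the substitution $\phi_x=v$ and $u=\Lambda(f)$. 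Comparing this time-independent part term-by-term with the integrand of \eqref{eq3.2} is the concrete computation: both contain $\mu(f)f^2$ (up to the factor coming from $\mu(f^2)$ integration by parts), $ff_x^2$, $fv^2$, $-\gamma_2 v f_x$, $\gamma_3 v^2$, and $-\tfrac{\gamma_1}{2}ff_{xx}$, so the expected conclusion is that $\mathcal{L}_2$ corresponds to the Hamiltonian formulation $\partial_t = \mathcal{J}_1\,\delta H_2$.

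Next I would handle Case I, which uses auxiliary fields $w$ and $z$ and is therefore a genuinely constrained/extended phase space. The density $\mathcal{L}_1$ is again linear in the time derivatives, but now the kinetic terms are $-wz_t-\phi_x\phi_t$-type couplings ($-wz_t$ inside the $w(\cdots)$ bracket and the $\tfrac{1}{2}v^2-vz_x$ kinetic-potential split), which signals that $w$ and $z$ are a conjugate pair. The approach is to integrate out the auxiliary fields using the algebraic constraints among the Euler--Lagrange equations \eqref{eq2.16} — in particular $z_x=v+\tilde\gamma_3 w$ and $f_{xx}=\mu(f)+wz_x+\gamma_2 w_x-2\gamma_1$, which express $w$ and $z_x$ in terms of $f$ and $v$ — and then substitute back into the non-kinetic part of $\mathcal{L}_1$. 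I expect this to collapse to a density whose spatial integral matches $H_1=\tfrac12\int(uf+v^2)dx$, thereby exhibiting $\mathcal{L}_1$ as the companion of the Hamiltonian formulation $\partial_t=\mathcal{J}_2\,\delta H_1$. The pairing is natural: the simpler quadratic Hamiltonian $H_1$ should go with the more complicated (Lie--Poisson) operator $\mathcal{J}_2$, and the more complicated $H_2$ with the simpler operator $\mathcal{J}_1$, which is precisely the ``swap'' characteristic of a bihamiltonian pair.

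The main obstacle will be making the phrase ``formally correspond'' into a statement clean enough to prove without getting lost in the bookkeeping of the auxiliary fields in Case I. There are two sources of mismatch I anticipate and would need to reconcile. First, the shift noted after \eqref{eq2.18} — that one must replace $f$ by $f+\gamma_1$ to recover \eqref{eq1.3} exactly — means the correspondence for Case I holds only up to this affine reparametrisation, and I would state the theorem modulo that shift rather than claim an on-the-nose identity. Second, the densities and the Hamiltonians differ by total $x$-derivatives and by integration-by-parts rearrangements (for instance $\int \mu(f^2)$ versus $\int \mu(f)f^2$, and the $ff_{xx}$ versus $f_x^2$ interchange), so the matching is at the level of the functionals $\int\mathcal{L}\,dx$ modulo such boundary terms, not the densities pointwise; I would make this explicit. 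Because the theorem is only asserting a \emph{formal} correspondence, I would structure the proof as: (i) extract the Poisson bracket encoded by each Lagrangian's kinetic term; (ii) identify each Lagrangian's potential part, after eliminating auxiliary fields and using the closed-form substitutions, with $H_1$ or $H_2$ up to boundary terms and the $f\mapsto f+\gamma_1$ shift; and (iii) conclude that the two Euler--Lagrange systems \eqref{eq2.15} and \eqref{eq2.17}--\eqref{eq2.18} are the Hamiltonian flows $\mathcal{J}_1\,\delta H_2$ and $\mathcal{J}_2\,\delta H_1$ respectively, which is exactly the bihamiltonian content of Theorem \ref{Thm3.1}.
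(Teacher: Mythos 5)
Your proposal follows essentially the same route as the paper: compute the momenta conjugate to the time derivatives of each first-order Lagrangian, perform the (degenerate) Legendre transform, eliminate the auxiliary fields $w,z$ via the Euler--Lagrange constraints \eqref{eq2.16}, and identify the resulting Hamiltonians with $H_1$ (for $\mathcal{L}_1$) and with a constant multiple of $H_2$ (for $\mathcal{L}_2$, after setting $\phi_x=v$). The paper stops at matching the Hamiltonian functionals and does not carry out your additional step of extracting and matching the Poisson operators, but the pairing you predict ($\mathcal{L}_1\leftrightarrow H_1$ with $\mathcal{J}_2$, and $\mathcal{L}_2\leftrightarrow H_2$ with $\mathcal{J}_1$) is exactly the one established there.
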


\begin{proof} The action is related to the Lagrangian by  $\mathcal{S}=\int(\int\mathcal{L}dx)dt$. The
first variational principle has the Lagrangian density,
\beq \mathcal{L}_1=\frac{1}{2}f_x^2+\frac{1}{2}\mu(f)f+\frac{1}{2}v^2-vz_x+w(fz_x-z_t+
\tilde{\gamma}_3v)+\gamma_2 w_x f-2\gamma_1f.\nn\eeq
The momenta conjugate to the velocities $f_t$, $v_t$, $z_t$ and $w_t$, respectively, are
\beq \frac{\p \mathcal{L}_1}{\p f_t}=0,\quad \frac{\p \mathcal{L}_1}{\p w_t}=0,
\quad \frac{\p \mathcal{L}_1}{\p z_t}=-w,\quad \frac{\p \mathcal{L}_1}{\p w_t}=0.\nn\eeq
Consequently, the Hamiltonian density is
\eqa
&&\mathcal{H}=-z_tw-\mathcal{L}_1\nn\\
&&\qquad =-\frac{1}{2}f_x^2-\frac{1}{2}\mu(f)f-\frac{1}{2}v^2+vz_x-w(fz_x+
\tilde{\gamma}_3v)-\gamma_2 w_x f+2\gamma_1\nn\\
&&\qquad=\frac{1}{2}\mu(f)f-\frac{1}{2}f_x^2+\frac{1}{2}v^2-ff_{xx}, \quad \mbox{by using \eqref{eq2.16}}.\nn
\eeqa
Therefore, the Hamiltonian is
\eqa &&H=\int \mathcal{H}dx=\int (\frac{1}{2}\mu(f)f-\frac{1}{2}f_x^2+\frac{1}{2}v^2-ff_{xx}) dx\nn\\
&&\quad=\frac{1}{2}\int(\mu(f)f-ff_{xx}+v^2)dx,\nn \eeqa
which is exactly $H_1$ defined in \eqref{eq3.1}.

In the second principle the Lagrangian density is
\beq \mathcal{L}_2=-f_xf_t+2\mu(f)f^2+ff_x^2+f\phi_x^2-\gamma_1ff_{xx}-2\gamma_2\phi_xf_x+2\gamma_3\phi_x^2
-\phi_x\phi_t.\nn\eeq
The momenta conjugate to the velocities $f_t$ and $\phi_t$, respectively, are
\beq \frac{\p \mathcal{L}_2}{\p f_t}=-f_x,\quad \frac{\p \mathcal{L}_2}{\p \phi_t}=-\phi_x.\nn\eeq
Consequently, the Hamiltonian density is
\eqa
&&\mathcal{H}=-f_xf_t-\phi_x\phi_t-\mathcal{L}_2\nn\\
&&\qquad =-2\mu(f)f^2-ff_x^2-f\phi_x^2+\gamma_1ff_{xx}+2\gamma_2\phi_xf_x-2\gamma_3\phi_x^2\nn
\eeqa
Now let us set $\phi_x=v$ and so
\beq H=\int (-2\mu(f)f^2-ff_x^2-fv^2+-\gamma_1ff_{xx}+2\gamma_2vf_x-2\gamma_3v^2)dx=-\dfrac{H_2}{2},\nn\eeq
where $H_2$ is defined in \eqref{eq3.2}.\end{proof}


\end{document}